\newcolumntype{C}{>$c<$}
\providecommand{\U}[1]{\protect\rule{.1in}{.1in}}                                                                                           
\newtheorem{theorem}{Theorem}  
\newtheorem{assumption}{Assumption}
\newtheorem{definition}{Definition}
\newtheorem{problem}{Problem}
\newtheorem{remark}{Remark}
\newcommand{\norm}[1]{\left\lVert#1\right\rVert}
\let\oldIEEEkeywords\IEEEkeywords
\def\IEEEkeywords{\oldIEEEkeywords\normalfont\bfseries\ignorespaces}
\newcommand{\argmin}{\mathrm{arg}\min}
\begin{document}
	
	\title{Controller Synthesis for Multi-Agent Systems With Intermittent Communication: A Metric Temporal Logic Approach}
	
	\author{Zhe~Xu, Federico M. Zegers, Bo Wu, Warren Dixon and Ufuk Topcu
		\thanks{Zhe~Xu and Bo Wu are with the Oden Institute
			for Computational Engineering and Sciences, University of Texas,
			Austin, Austin, TX 78712, Federico M. Zegers and Warren Dixon are with the Department of Mechanical
            and Aerospace Engineering, University of Florida, Gainesville, Florida 32611, Ufuk Topcu is with the Department
			of Aerospace Engineering and Engineering Mechanics, and the Oden Institute
			for Computational Engineering and Sciences, University of Texas,
			Austin, Austin, TX 78712, e-mail: zhexu@utexas.edu, fredzeg@ufl.edu, bwu3@utexas.edu, wdixon@ufl.edu, utopcu@utexas.edu.}     
	}

	\maketitle
	
\begin{abstract} 
This paper develops a controller synthesis approach for a multi-agent system (MAS) with
intermittent communication. We adopt a \textit{leader-follower} scheme, where a mobile \textit{leader} with absolute position sensors switches among
a set of \textit{followers} without absolute position sensors to
provide each follower with intermittent state information. We model the MAS as a switched system. The followers are to asymptotically reach a predetermined \textit{consensus} state. To guarantee the stability of the switched system and the consensus of the followers, we derive \textit{maximum and minimal dwell-time conditions} to constrain the intervals between consecutive time instants at which the leader should provide state information to the same follower. Furthermore, the leader needs to satisfy practical constraints such as charging its battery and staying in specific regions of interest. Both the maximum and minimum dwell-time conditions and these practical constraints can be expressed by \textit{metric temporal logic} (MTL) specifications. We iteratively compute the optimal control inputs such that the leader satisfies the MTL specifications, while guaranteeing stability and consensus of the followers. We implement the proposed method on a case study with three mobile robots as the followers and one quadrotor as the leader.
\end{abstract}     
	
	\section{Introduction}
	\label{sec_intro}
    
	Coordination strategies for multi-agent systems (MAS)
	have been traditionally designed under the assumption that state feedback is continuously
	available. However, continuous communication over
	a network is often impractical, especially in mobile robot applications
	where shadowing and fading in the wireless communication can cause unreliability, and each agent has limited energy resources \cite{goldsmith2005wireless, Bo_ADHS}. 
	
	Due to these
	constraints, there is a strong interest in developing
	MAS coordination methods that rely on intermittent information over a communication network. The results in \cite{Wang.Lemmon2009,Meng.Chen2013,Cheng.Kan.ea2017,Li.Liao.ea2015,Heemels.Donkers2013,tabuada2007event}
	develop \textit{event-triggered} and \textit{self-triggered} controllers that utilize sampled
	data from networked agents only when triggered by conditions that ensure
	desired stability and performance properties. However, these results
	require a network represented by a strongly connected graph to enable agent coordination.
   This requirement of a \textit{strongly connected network} induces 
   constraints on the motion of the individual agents and additional maneuvers that may deviate
   from their primary purpose. Event-triggered and self-triggered control methods
   can also be used to coordinate the agents that communicate with a central base station or \textit{cloud}
    intermittently as in \cite{Nowzari_Pappas2016}, where submarines
   intermittently surface to obtain state information about themselves and
   their neighbors from a cloud. However, such a coordination strategy
   also requires additional maneuvers from the submarines that detract
   from their primary purpose. 
	
 Depending on the application and/or environment, some of the agents in
 a MAS may not be equipped with absolute position sensors. In such
 scenarios, the results in \cite{Wang.Lemmon2009,Meng.Chen2013,Cheng.Kan.ea2017,Li.Liao.ea2015,Heemels.Donkers2013,tabuada2007event}
are invalid. Therefore, there is a need for distributed methods capable
of coordinating these agents that are not equipped with
absolute position sensors while utilizing intermittent information.
Moreover, such methods should not require agents to perform additional
maneuvers to ensure the connectivity of the network. In \cite{Zegers.Chen.ea2019}, a
set of \textit{followers} operating with inaccurate position sensors 
are able to reach consensus at a desired state while a \textit{leader} intermittently         
provides each follower with state information. By introducing a
leader, the followers are able to perform their tasks without
the need to perform additional maneuvers to obtain state information.
	
Building on the work of \cite{Zegers.Chen.ea2019}, we adopt a \textit{leader-follower} scheme, where the MAS is modeled as a switched system \cite{Wu2019SwitchedLS,Bo2018CSL}. As an illustrative example shown in Fig. \ref{fig_intro}, the three followers need to reach consensus at the center of the green feedback region and one leader agent is to provide intermittent state information to each follower. To guarantee the stability of the switched system and consensus of the followers, we derive \textit{maximum and minimal dwell-time conditions} to constrain the intervals between consecutive time instants at which the leader should provide state information to the same follower. 

The maximum and minimum dwell-time conditions can be encoded by \textit{metric temporal logic} (MTL) specifications \cite{Ouaknine2005}. Such specifications have also been used in many robotic applications for time-related specifications \cite{zhe_ijcai2019}. Furthermore, as the leader is typically more energy-consuming and safety-critical due to the high-quality sensing, communication and mobility equipments, the leader is likely required to satisfy additional MTL specifications for practical constraints such as charging its battery and staying in specific regions. 
In the example shown in Fig. \ref{fig_intro}, the leader needs to satisfy an MTL specification ``\textit{reach the charging station $G_1$ or $G_2$ in every 6 time units and always stay in the yellow region $D$}''. 
	
We design the followers' controllers such that guarantees on the stability of the switched system and consensus of the followers hold, provided that the maximum and minimal dwell-time conditions are satisfied. Then we synthesize the leader's controller to satisfy the same MTL specifications that encode the maximum and minimal dwell-time conditions and the additional practical constraints. There is a rich literature on controller synthesis subject to temporal logic specifications \cite{KHFP,Nok2012,BluSTL,sayan2016,Zhiyu2017ACC,Zhiyu2017CDC,Bo2019,zhe_advisory,zheACC, zhe_control,zheACC2018}. For linear or switched linear systems, the controller synthesis problem can be converted into a mixed-integer linear programming (MILP) problem~\cite{BluSTL,sayan2016}. Additionally, as the followers are not equipped with absolute position sensors, we design an observer to estimate the followers' states and the state estimates can change abruptly due to the intermittent communication of state information. Therefore, we solve the MILP problem iteratively to account for such abrupt changes.
	
We provide an implementation of the proposed method on a simulation case study with three mobile robots as the followers and one quadrotor as the leader. The results in two different scenarios show that the synthesized controller can lead to satisfaction of the MTL specifications, while achieving the stability of the switched system and consensus of the followers.

	\begin{figure}
		\centering
		\includegraphics[scale=0.3]{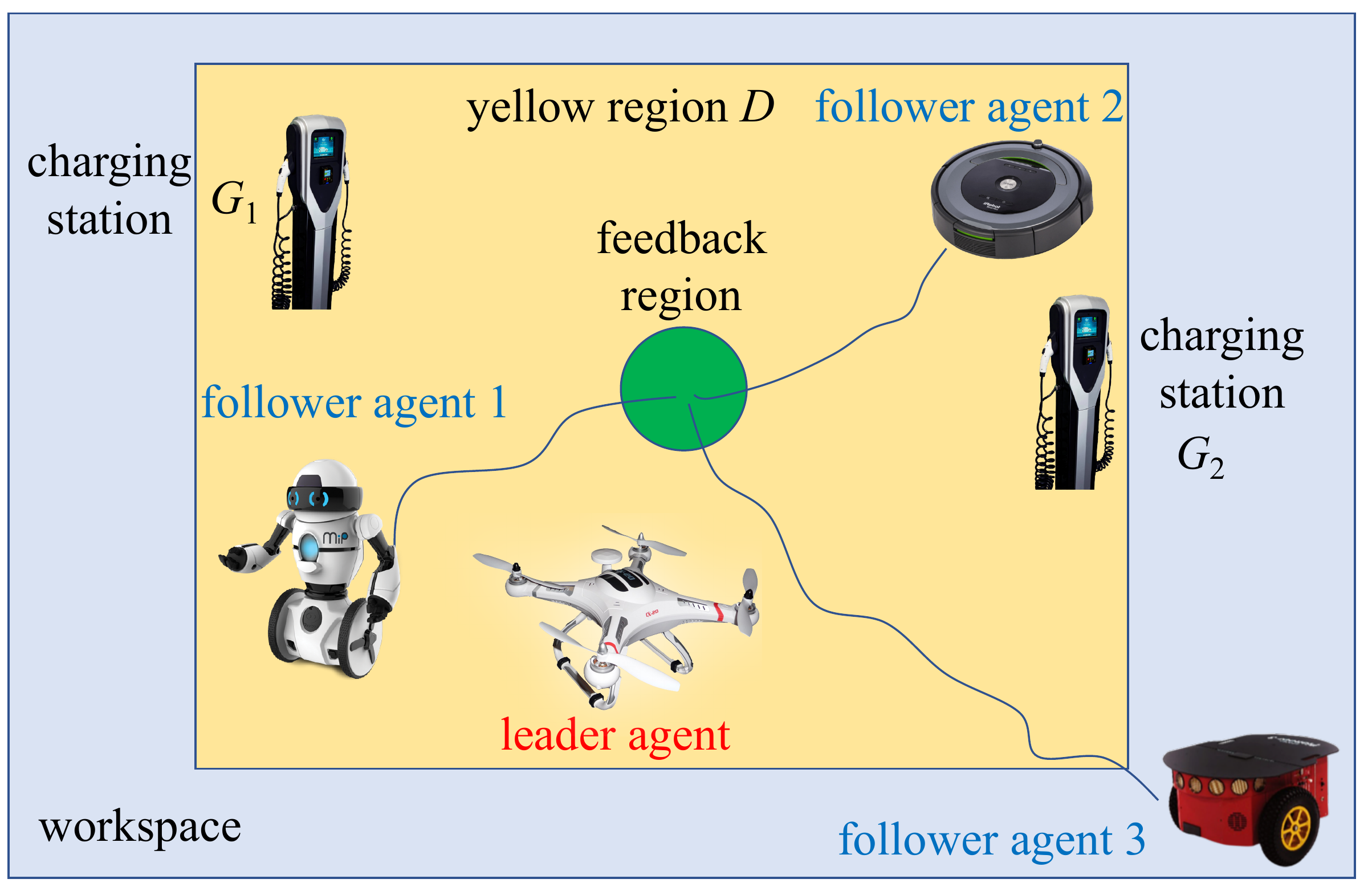}
		\caption{Illustrative example of a MAS with a leader (quadrotor) and three followers (mobile robots).}  
		\label{fig_intro}
	\end{figure}

\section{Background and Problem Formulation}

\subsection{Agent Dynamics}
\label{sec_dynamics}

Consider a multi-agent system (MAS) consisting of $Q$ followers ($Q\in\mathbb{Z}_{>0}$\footnote{$\mathbb{Z}_{>0}$ denotes the
set of positive integers.}) index by $i\in F\triangleq\left\{ 1,...,Q\right\}$
and a leader indexed by $0$. Let the time set be $\mathbb{T} = \mathbb{R}_{\ge0}$.
Let $y_{0},\text{ }y_{i}: \mathbb{T}\rightarrow\mathbb{R}^{z}$
denote the position of the leader and follower $i$, respectively. Let $x_{0}: \mathbb{T}\rightarrow\mathbb{R}^{l}$ and $x_{i}: \mathbb{T}\rightarrow\mathbb{R}^{m}$
denote the state of the leader and follower $i$, respectively.
The linear time-invariant dynamics of the leader and follower $i$
are 
\begin{align}
\begin{split}
\dot{x}_{0}\left(t\right) & =  A_0x_{0}\left(t\right)+B_0u_{0}\left(t\right), \\
 y_{0}\left(t\right) & =  C_0x_{0}\left(t\right), \\
\dot{x}_{i}\left(t\right) & =  Ax_{i}\left(t\right)+Bu_{i}\left(t\right)+d_{i}\left(t\right), \\
 y_{i}\left(t\right) & = Cx_{i}\left(t\right),
 \end{split}
\label{Follower Dynamics}
\end{align}
where $A_0\in\mathbb{R}^{l\times l}, A\in\mathbb{R}^{m\times m}$, $B_0\in\mathbb{R}^{l\times n}, B\in\mathbb{R}^{m\times n}$, $C_0\in\mathbb{R}^{z\times l}, C\in\mathbb{R}^{z\times m}$. Here, $u_{0},\text{ }u_{i}: \mathbb{T}\rightarrow\mathbb{R}^{n}$
denote the control inputs of the leader and follower $i$, respectively,
and $d_{i}: \mathbb{T}\rightarrow\mathbb{R}^{m}$ is an
exogenous disturbance. For simplicity, we assume that $\lambda_{\textrm{max}}\left(A\right)\footnote{$\lambda_{\textrm{max}}\left(A\right)$ denotes the maximum
	singular value of $A$.}\in\mathbb{R}_{>0}$ and $B$ has full row rank. 

\subsection{Sensing and Communication}
\label{sec_sensing}
Each follower is equipped with a relative position sensor and hardware to enable communication with the leader.               
Since the followers lack absolute position sensors, they are
not able to localize themselves within the global coordinate system.
Nevertheless, the followers can use their relative position sensors
to enable self-localization relative to their initially known locations. However,
relative position sensors like encoders and inertial measurement
units (IMUs) can produce unreliable position information since e.g., wheels of mobile robots
may slip and IMUs may generate noisy data. Hence, the $d_{i}\left(t\right)$
term in (\ref{Follower Dynamics}) models the inaccurate position
measurements from the relative position sensor of follower $i$ as
well as any external influences from the environment.
Navigation through the use of a relative position sensor results in
dead-reckoning, which becomes increasingly more inaccurate with time
if not corrected. On the other hand, the leader is equipped with an absolute
position sensor and hardware to enable communication
with each follower. Unlike a relative position sensor, an absolute
position sensor allows localization of the agents within the global coordinate
system.

\textcolor{black}{The followers' task is to reach consensus to a predetermined state $x_{\textrm{g}}\in\mathbb{R}^{m}$. A feedback region (see Fig. \ref{fig_intro}) centered at the position $Cx_{\textrm{g}}\in\mathbb{R}^{z}$ with radius $R_{\textrm{g}}\in\mathbb{R}_{>0}$ is capable of providing state information to each follower $i\in F$
once $\left\Vert y_{i}\left(t\right)-Cx_{\textrm{g}}\right\Vert=\left\Vert Cx_{i}\left(t\right)-Cx_{\textrm{g}}\right\Vert \leq R_{\textrm{g}}$.}
The leader's task 
is to provide state information to each follower while they navigate to $x_{\textrm{g}}$ with the intermittent state information. Both the leader and the followers are equipped with digital communication hardware where communication is only possible at discrete time instants. Let $R_{\textrm{c}}\in\mathbb{R}_{>0}$ and $R_{\textrm{s}}\in\mathbb{R}_{>0}$
denote the communication and sensing radii of each agent, respectively. For simplicity, let $R_{\textrm{c}}=R_{\textrm{s}}\triangleq R$. 

The leader provides state information to the follower $i$ (i.e., \textit{services} the follower $i$) if and only if $\left\Vert y_{i}\left(t\right)-y_{0}\left(t\right)\right\Vert \leq R$ and the communication channel of the follower $i$ is on. 
We define the \textit{communication switching signal} $\zeta_i$ for follower $i$ as $\zeta_i=1$ if the communication channel is on for follower $i$; and $\zeta_i=0$ if the communication channel is off for follower $i$. We use $t_{s}^{i}\ge0$ to indicate the $s^{th}$
servicing time instance for follower $i$. 
Hence, the $(s+1)^{th}$ servicing time instant for follower $i$ is\footnote{For $s=0,$ $t_{0}^{i}$ is the initial time, for simplicity we take $t_{0}^{i}=0.$}  
\begin{equation}\nonumber
t_{s+1}^{i}\triangleq \inf\left\{ t\geq t_{s}^{i}: (\left\Vert y_{i}\left(t\right)-y_{0}\left(t\right)\right\Vert \leq R) \land (\zeta_i(t)=1)\right\} 
\end{equation}
where $\land$ denotes the \textit{conjunction} logical connective.

\subsection{State Observer and Error Dynamics}
The followers, not equipped with absolute position sensors, implement the following model-based
observer to estimate the state of each follower $i\in F$:
\begin{align}
\begin{split}
\dot{\hat{x}}_{i}\left(t\right) & \triangleq  A\hat{x}_{i}\left(t\right)+Bu_{i}\left(t\right), \text{ }t\in\left[t_{s}^{i},t_{s+1}^{i}\right),\\
\hat{x}_{i}\left(t_{s}^{i}\right) & \triangleq x_{i}\left(t_{s}^{i}\right), 
\end{split}
\label{Reset}
\end{align}
where $\hat{x}_{i}: \mathbb{T}\rightarrow\mathbb{R}^{m}$ 
denotes the estimate of $x_{i}$. 

Then we can obtain the position estimate of follower $i$ as 
\begin{align}
\begin{split}
\hat{y}_{i}\left(t\right) & \triangleq C\hat{x}_{i}\left(t\right).
\end{split}
\label{Reset_y}
\end{align}

To facilitate the analysis, we define the following two error signals
\begin{equation}
e_{1,i}\left(t\right)\triangleq\hat{x}_{i}\left(t\right)-x_{i}\left(t\right)
\label{e1i}
\end{equation}
and
\begin{equation}
e_{2,i}\left(t\right)\triangleq x_{\textrm{g}}-\hat{x}_{i}\left(t\right).  
\label{e2i}
\end{equation}
Similar to \cite{Zegers.Chen.ea2019}, we adopt the following assumptions.
\begin{assumption}
	\label{Assumption 1} The state estimate $\hat{x}_{i}$ is initialized
	as $\hat{x}_{i}\left(0\right)=x_{i}\left(0\right)$ for all $i\in F$.
\end{assumption}

\begin{assumption}
	\label{Assumption 2}  The leader has full knowledge of its own state
	$x_{0}\left(t\right)$ for all $t\geq0$ and the initial state $x_{i}\left(0\right)$
    for all $i\in F$.
\end{assumption}

\begin{assumption}
	\label{Assumption 3} The disturbance $d_{i}$ is bounded,
	i.e., $\left\Vert d_{i}\left(t\right)\right\Vert \leq\overline{d}_{i}$
	for all $t\geq0$, where $\overline{d}_{i}\in\mathbb{R}_{>0}$ is a
	known constant. 
\end{assumption}

The control of follower $i$ is as follows:
\begin{equation}
u_{i}\left(t\right)\triangleq-B^{+}A\hat{x}_{i}\left(t\right)+k_{i}B^{+}e_{2,i}\left(t\right) 
\label{Follower Control}
\end{equation}
such that $B^{+}$ denotes the pseudo-inverse of $B$ and $k_{i}\in\mathbb{R}_{>0}$
is a user-defined parameter. Since $B$ has full row rank (see Section \ref{sec_dynamics}), $BB^{+}=I_{m\times m}$, where $I_{m\times m}$ is the identity matrix.

At each servicing time instant $t_{s}^{i}$, with the feedback provided by the leader, the state
estimate $\hat{x}_{i}$ of follower $i$ immediately resets to $x_{i}$. Therefore, the state
estimates follow the dynamics of switched systems\cite{Xuping}.

Substituting (\ref{Follower Dynamics})     
and (\ref{Reset}) into the time-derivative of (\ref{e1i}) yields
\begin{eqnarray}
\dot{e}_{1,i}\left(t\right) & = & Ae_{1,i}\left(t\right)-d_{i}\left(t\right),\text{ }t\in\left[t_{s}^{i},t_{s+1}^{i}\right), \label{e1i Dot closed-loop}\\
e_{1,i}\left(t_{s}^{i}\right) & = & 0_{m}, \label{e1i reset}
\end{eqnarray}
where $0_{m}\in\mathbb{R}^{m}$ is the zero column vector. Substituting (\ref{Reset}) into the time-derivative of (\ref{e2i}) yields

\begin{alignat}{1}
\dot{e}_{2,i}\left(t\right) & =-k_{i}e_{2,i}\left(t\right),\text{ }t\in\left[t_{s}^{i},t_{s+1}^{i}\right), \text{ }\label{e2i Dot closed-loop}\\
e_{2,i}\left(t_{s}^{i}\right) & =x_{\textrm{g}}-x_{i}\left(t_{s}^{i}\right).     \label{e2i Reset}
\end{alignat}

\subsection{Metric Temporal Logic (MTL)}
\label{MTL}   
To achieve the stability of the swicthed system and consensus of the followers while satisfying the practical constraints
of the leader, the requirements of the MAS can be specified in MTL specifications (see details in Section \ref{sec_reactive}). In this subsection, we briefly review the MTL interpreted
over discrete-time trajectories~\cite{FainekosMTL}. The domain of the position of the agents $y$ is denoted by $\mathcal{Y}\subset\mathbb{R}^z$. The domain $\mathbb{B} = \{true, false\}$ is the Boolean domain, and the time index set is $\mathbb{I} = \{0,1,\dots\}$. With slight abuse of notation, we use $y$ to denote the discrete-time trajectory as a function from $\mathbb{I}$ to $\mathcal{Y}$. A set $AP$ is a set of atomic propositions, each mapping $\mathcal{Y}$ to $\mathbb{B}$. The syntax of MTL is defined recursively as follows:
\[
\phi:=\top\mid \pi\mid\lnot\phi\mid\phi_{1}\wedge\phi_{2}\mid\phi_{1}\vee
\phi_{2}\mid\phi_{1}\mathcal{U}_{\mathcal{I}}\phi_{2}
\]
where $\top$ stands for the Boolean constant True, $\pi\in AP$ is an atomic 
proposition, $\lnot$ (negation), $\wedge$ (conjunction), $\vee$ (disjunction) 
are standard Boolean connectives, $\mathcal{U}$ is a temporal operator
representing \textquotedblleft until\textquotedblright~and $\mathcal{I}$ is a time interval of
the form $\mathcal{I}=[j_{1},j_{2}]$ ($j_1\le j_2$, $j_1, j_2\in\mathbb{I}$). We
can also derive two useful temporal operators from \textquotedblleft
until\textquotedblright~($\mathcal{U}$), which are \textquotedblleft
eventually\textquotedblright~$\Diamond_{\mathcal{I}}\phi=\top\mathcal{U}_{\mathcal{I}}\phi$ and
\textquotedblleft always\textquotedblright~$\Box_{\mathcal{I}}\phi=\lnot\Diamond_{\mathcal{I}}\lnot\phi$. We define the set of states that satisfy the atomic proposition $\pi$ as $\mathcal{O}(\pi)\in \mathcal{Y}$. 

Next, we introduce the Boolean semantics of MTL for trajectories of finite length in the strong and the weak view, which are modified from the literature of temporal logic model checking and monitoring \cite{Eisner2003,KupfermanVardi2001,Ho2014}.  We use $t[j]\in\mathbb{T}$ to denote the time instant at time index $j\in\mathbb{I}$ and $y^j\triangleq y(t[j])$ to denote the value of $y$ at time $t[j]$. In the following, $(y^{0:H},j)\models_{\rm{S}}\phi$ (resp. $(y^{0:H},j)\models_{\rm{W}}\phi$)
means the trajectory $y^{0:H}\triangleq y^0\dots y^H$ $(H\in\mathbb{Z}_{\ge0})$ strongly (resp. weakly) satisfies $\phi$ at time index $j$, $(y^{0:H},j)\not\models_{\rm{S}}\phi$ (resp. $(y^{0:H},j)\not\models_{\rm{W}}\phi$)
means $y^{0:H}$ fails to strongly (resp. weakly) satisfy $\phi$ at time index $j$. 

\begin{definition}
	The Boolean semantics of MTL for trajectories of finite length in the strong view is defined recursively as follows~\cite{zhe_advisory}:
	\[
	\begin{split}
	(y^{0:H},j)\models_{\rm{S}}\pi~\mbox{iff}~& j\le H~\mbox{and}~y^j\in\mathcal{O}(\pi),\\
	(y^{0:H},j)\models_{\rm{S}}\lnot\phi~\mbox{iff}~ & (y^{0:H},j)\not\models_{\rm{W}}\phi,\\
	(y^{0:H},j)\models_{\rm{S}}\phi_{1}\wedge\phi_{2}~\mbox{iff}~ &  (y^{0:H},j)\models_{\rm{S}}\phi
	_{1}~\\& ~\mbox{and}~(y^{0:H},j)\models_{\rm{S}}\phi_{2},\\
	(y^{0:H},j)\models_{\rm{S}}\phi_{1}\mathcal{U}_{\mathcal{I}}\phi_{2}~\mbox{iff}~ &  \exists
	j^{\prime}\in j+\mathcal{I}, \mbox{s.t.} (y^{0:H},j^{\prime})\models_{\rm{S}}\phi_{2},\\
	&   (y^{0:H},j^{\prime\prime})\models_{\rm{S}}\phi_{1} \forall j^{\prime\prime}\in\lbrack j,j^{\prime}).
	\end{split}
	\]
	\label{strong}
\end{definition}

\begin{definition}
	The Boolean semantics of MTL for trajectories of finite length in the weak view is defined recursively as follows~\cite{zhe_advisory}:
	\[
	\begin{split}
	(y^{0:H},j)\models_{\rm{W}}\pi~\mbox{iff}~& \textrm{either of the following holds}:\\
	& 1)~j\le H~\mbox{and}~y^j\in\mathcal{O}(\pi);\\
	& 2)~j>H,\\	
	(y^{0:H},j)\models_{\rm{W}}\lnot\phi~\mbox{iff}~ & (y^{0:H},j)\not\models_{\rm{S}}\phi,\\	
	(y^{0:H},j)\models_{\rm{W}}\phi_{1}\wedge\phi_{2}~\mbox{iff}~ &  (y^{0:H},j)\models_{\rm{W}}\phi
	_{1}~\\&~\mbox{and}~(y^{0:H},j)\models_{\rm{W}}\phi_{2},\\	
	(y^{0:H},j)\models_{\rm{W}}\phi_{1}\mathcal{U}_{\mathcal{I}}\phi_{2}~\mbox{iff}~ & \exists
	j^{\prime}\in j+\mathcal{I}, \mbox{s.t.} (y^{0:H},j^{\prime})\models_{\rm{W}}\phi_{2},\\
	&  (y^{0:H},j^{\prime\prime})\models_{\rm{W}}\phi_{1} \forall j^{\prime\prime}\in\lbrack j, j^{\prime}).
	\label{weak}
	\end{split}
	\]
\end{definition}

Intuitively, if a trajectory of finite length can be extended to infinite length, then the strong view indicates that the truth value of the formula on the infinite-length trajectory is already ``determined'' on the trajectory of finite length, while the weak view indicates that it may not be ``determined'' yet \cite{Ho2014}. As an example, a trajectory $y^{0:3}=y^0y^1y^2y^3$ is not possible to strongly satisfy $\phi=\Box_{[0,5]}\pi$ at time 0, but $y^{0:3}$ is possible to strongly violate $\phi$ at time 0, i.e., $(y^{1:3},0)\models_{\rm{S}}\lnot\phi$ is possible.

For an MTL formula $\phi$, the necessary length $\norm{\phi}$ is defined recursively as follows \cite{Maler2004}: 
\[                                                                   
\begin{split}
&\norm{\pi} =0, ~\norm{\lnot\phi} =\norm{\phi},\\
&\norm{\phi_{1}\wedge\phi_{2}}=\max(\norm{\phi_{1}},\norm{\phi_{2}}),\\
&\norm{\phi_1\mathcal{U}_{[j_1, j_2]}\phi_{2}}=\max(\norm{\phi_{1}},\norm{\phi_{2}})+j_2.
\end{split}
\]   

\subsection{Problem Statement}
\label{sec_problem}
We now present the problem formulation for the control of the MAS with intermittent communication and MTL specifications.

\begin{problem}
	Design the control inputs for the leader $\mathbf{u}_0 = [u^0_0, u^1_0, \cdots]$ ($u^j_0$ denotes the control input at time index $j$) such that the following characteristics are satisfied while minimizing the control effort $\norm{\mathbf{u}_0}$\footnote{$\left\Vert\cdot\right\Vert $ denotes the 2-norm.}:\\
	\textit{Correctness}: A given MTL specification $\phi$ is weakly satisfied by the trajectory of the leader.\\
	\textit{Stability}: The error signal $e_{1,i}\left(t\right)$ is uniformly bounded, and the error signal $e_{2,i}\left(t\right)$ is asymptotically regulated\footnote{The error signal $e_{2,i}\left(t\right)$ is asymptotically regulated
		if $\left\Vert e_{2,i}\left(t\right)\right\Vert \rightarrow0$ as
		$t\rightarrow\infty.$} for each follower $i$.\\
	\textit{Consensus}: The states of the followers asymptotically reach consensus to $x_{\textrm{g}}$.
	\label{problem}
\end{problem}

\section{Stability and Consensus Analysis}
\label{sec_dwell_time}
In this section, we provide the conditions for achieving the stability of the switched system and the consensus of the followers. Such conditions include maximal (see Theorem \ref{Theorem 1}) and minimal (see Theorem \ref{Theorem 2}) dwell-time conditions on the intervals between consecutive time instants at which the leader should provide state information to the same follower.

\begin{theorem}
	\label{Theorem 1} 
	Let $V_{\textrm{T}}\in\mathbb{R}_{>0}$ be a user-defined
	parameter. Then, the error signal in (\ref{e1i}) for follower $i$
	is uniformly bounded, i.e., $\left\Vert e_{1,i}\left(t\right)\right\Vert \leq V_{\textrm{T}}$
	for all $t\geq0$, provided the leader satisfies the maximum dwell-time
	condition 
	\begin{alignat}{1}
	t_{s+1}^{i}-t_{s}^{i} & \leq\frac{1}{\lambda_{\textrm{max}}\left(A\right)}\ln\left(\frac{\lambda_{\textrm{max}}\left(A\right)V_{\textrm{T}}}{\overline{d}_{i}}+1\right)
	\label{Maximum Dwell-Time Condition}
	\end{alignat}
	for all $s\in\mathbb{Z}_{\geq0}$.
\end{theorem}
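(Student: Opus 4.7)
The plan is to solve the linear error equation (\ref{e1i Dot closed-loop}) explicitly on each inter-servicing interval $[t_s^i, t_{s+1}^i)$, bound the solution using the disturbance bound from Assumption \ref{Assumption 3}, and then show that the claimed dwell-time condition (\ref{Maximum Dwell-Time Condition}) is exactly what forces that bound to stay at or below $V_{\textrm{T}}$.

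First, I would fix an arbitrary servicing index $s$ and apply variation of constants to (\ref{e1i Dot closed-loop}). The reset $e_{1,i}(t_s^i)=0_m$ from (\ref{e1i reset}) eliminates the homogeneous term, leaving
\begin{equation*}
e_{1,i}(t) = -\int_{t_s^i}^{t} e^{A(t-\tau)} d_i(\tau)\,d\tau, \qquad t\in[t_s^i,t_{s+1}^i).
\end{equation*}
Taking the 2-norm, using submultiplicativity together with the standard exponential bound $\norm{e^{A\tau}}\le e^{\lambda_{\textrm{max}}(A)\tau}$ for $\tau\ge 0$, and invoking Assumption \ref{Assumption 3} to replace $\norm{d_i(\tau)}$ by $\overline{d}_i$, I would obtain
\begin{equation*}
\norm{e_{1,i}(t)} \;\le\; \overline{d}_i\int_{t_s^i}^{t} e^{\lambda_{\textrm{max}}(A)(t-\tau)}\,d\tau \;=\; \frac{\overline{d}_i}{\lambda_{\textrm{max}}(A)}\bigl(e^{\lambda_{\textrm{max}}(A)(t-t_s^i)}-1\bigr),
\end{equation*}
where the closed-form evaluation of the integral uses $\lambda_{\textrm{max}}(A)>0$, which is guaranteed by the standing assumption in Section \ref{sec_dynamics}.

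The right-hand side is strictly increasing in $t-t_s^i$, so its supremum on $[t_s^i,t_{s+1}^i)$ is attained as $t\uparrow t_{s+1}^i$ and is thus controlled entirely by the gap $t_{s+1}^i-t_s^i$. Imposing that this supremum be at most $V_{\textrm{T}}$, and inverting the resulting exponential inequality (valid because the logarithm is monotone and its argument $\lambda_{\textrm{max}}(A)V_{\textrm{T}}/\overline{d}_i+1>1$), I would recover precisely (\ref{Maximum Dwell-Time Condition}); conversely, this inequality forces $\norm{e_{1,i}(t)}\le V_{\textrm{T}}$ throughout the interval. Since the same reset $e_{1,i}(t_s^i)=0_m$ occurs at every servicing instant, the argument is interval-independent, and uniform boundedness for all $t\ge 0$ follows by repetition over $s\in\mathbb{Z}_{\ge 0}$.

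The main technical obstacle is justifying the exponential bound $\norm{e^{A\tau}}\le e^{\lambda_{\textrm{max}}(A)\tau}$ under the paper's convention that $\lambda_{\textrm{max}}(A)$ denotes the maximum singular value of $A$; strictly, one typically argues via the logarithmic norm of $A$ or via the differential inequality $\tfrac{d}{d\tau}\norm{e^{A\tau}v}^2 \le 2\lambda_{\textrm{max}}(A)\norm{e^{A\tau}v}^2$ obtained from $v^{\top}(A+A^{\top})v\le 2\lambda_{\textrm{max}}(A)\norm{v}^2$, and I would make this step explicit. Once this bound is in hand, the remainder of the proof is a direct computation of the scalar integral and a one-line algebraic inversion.
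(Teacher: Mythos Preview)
Your argument is correct and reaches exactly the same terminal bound $\norm{e_{1,i}(t)}\le\frac{\overline{d}_i}{\lambda_{\textrm{max}}(A)}(e^{\lambda_{\textrm{max}}(A)(t-t_s^i)}-1)$ as the paper, but the route differs. The paper does not write the variation-of-constants solution; instead it introduces the quadratic Lyapunov functional $V_{1,i}=\tfrac{1}{2}e_{1,i}^{T}e_{1,i}$, bounds its derivative by $\dot V_{1,i}\le 2\lambda_{\textrm{max}}(A)V_{1,i}+\overline{d}_i\sqrt{2V_{1,i}}$, and then invokes the Comparison Lemma to obtain the same exponential envelope. Your explicit-solution approach is more elementary in that it avoids the Comparison Lemma entirely and makes the appearance of the matrix exponential transparent; the paper's Lyapunov phrasing, on the other hand, dovetails with the proof of Theorem~\ref{Theorem 2} and keeps the whole stability argument in a uniform style. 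Note also that the differential inequality you propose to justify $\norm{e^{A\tau}}\le e^{\lambda_{\textrm{max}}(A)\tau}$ (via $v^{\top}(A+A^{\top})v\le 2\lambda_{\textrm{max}}(A)\norm{v}^2$) is precisely the step the paper uses implicitly when bounding $e_{1,i}^{T}Ae_{1,i}$, so the two arguments share the same analytical core and the ``technical obstacle'' you flag is handled in the paper in essentially the way you suggest.
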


\begin{proof}
	Let $s\in\mathbb{Z}_{\geq0}.$ Consider the common Lyapunov functional
	candidate $V_{1,i}:\mathbb{R}^{m}\rightarrow\text{\ensuremath{\mathbb{R}}}_{\geq0}$
	\begin{equation}
	V_{1,i}\left(e_{1,i}\left(t\right)\right)\triangleq\frac{1}{2}e_{1,i}^{T}\left(t\right)e_{1,i}\left(t\right).
	\label{V1}
	\end{equation}
	By (\ref{Follower Dynamics}) and (\ref{Reset}),
	(\ref{e1i}) is continuously differentiable over $\left[t_{s}^{i},t_{s+1}^{i}\right).$
	Substituting (\ref{e1i Dot closed-loop}) when $t\in\left[t_{s}^{i},t_{s+1}^{i}\right)$
	into the time-derivative of (\ref{V1}) yields $\dot{V}_{1,i}\left(e_{1,i}\left(t\right)\right)=e_{1,i}^{T}\left(t\right)\left(Ae_{1,i}\left(t\right)-d_{i}\left(t\right)\right),$
	which can be upper bounded by 
	\begin{equation}
	\dot{V}_{1,i}\left(e_{1,i}\left(t\right)\right)\leq\lambda_{\textrm{max}}\left(A\right)\left\Vert e_{1,i}\left(t\right)\right\Vert ^{2}+\overline{d}_{i}\left\Vert e_{1,i}\left(t\right)\right\Vert \label{V1 Dot Bound}
	\end{equation}
	Substituting (\ref{V1}) into (\ref{V1 Dot Bound})
	produces
	\begin{equation}
	\dot{V}_{1,i}\left(e_{1,i}\left(t\right)\right)\leq2\lambda_{\textrm{max}}\left(A\right)V_{1,i}\left(e_{1,i}\left(t\right)\right)+\overline{d}_{i}\sqrt{2V_{1,i}\left(e_{1,i}\left(t\right)\right)}.\label{V1 Dot Bound 2}
	\end{equation}
 
	Invoking the Comparison Lemma \cite[Lemma 3.4]{Khalil} on 
	(\ref{V1 Dot Bound 2}) over $\left[t_{s}^{i},t_{s+1}^{i}\right)$
	yields 
	\begin{equation}
	V_{1,i}\left(e_{1,i}\left(t\right)\right)\leq\left(\frac{\overline{d}_{i}\sqrt{2}}{2\lambda_{\textrm{max}}\left(A\right)}\left(e^{\lambda_{\textrm{max}}\left(A\right)\left(t-t_{s}^{i}\right)}-1\right)\right)^{2}.\label{V1 Bound}
	\end{equation}
	Substituting (\ref{V1}) into (\ref{V1 Bound}) yields $\left\Vert e_{1,i}\left(t\right)\right\Vert \leq\frac{\overline{d}_{i}}{\lambda_{\textrm{max}}\left(A\right)}\left(e^{\lambda_{\textrm{max}}\left(A\right)\left(t-t_{s}^{i}\right)}-1\right).$
	Now, define $\Phi_{i}:\left[t_{s}^{i},t_{s+1}^{i}\right]\rightarrow\mathbb{R}_{\geq0}$
	by $\Phi_{i}\left(t\right)\triangleq\frac{\overline{d}_{i}}{\lambda_{\textrm{max}}\left(A\right)}\left(e^{\lambda_{\textrm{max}}\left(A\right)\left(t-t_{s}^{i}\right)}-1\right).$
	Since $\left\Vert e_{1,i}\left(t\right)\right\Vert \leq\frac{\overline{d}_{i}}{\lambda_{\textrm{max}}\left(A\right)}\left(e^{\lambda_{\textrm{max}}\left(A\right)\left(t-t_{s}^{i}\right)}-1\right)$
	for all $t\in\left[t_{s}^{i},t_{s+1}^{i}\right)$ and $\left\Vert e_{1,i}\left(t_{s+1}^{i}\right)\right\Vert =0$
	where $\Phi_{i}\left(t_{s+1}^{i}\right)>0,$ then $\left\Vert e_{1,i}\left(t\right)\right\Vert \leq\Phi_{i}\left(t\right)$
	for all $t\in\left[t_{s}^{i},t_{s+1}^{i}\right].$ If $\Phi_{i}\left(t_{s+1}^{i}\right)\leq V_{\textrm{T}},$
	then $\left\Vert e_{1,i}\left(t\right)\right\Vert \leq V_{\textrm{T}}$ for
	all $t\in\left[t_{s}^{i},t_{s+1}^{i}\right].$ Hence, the corresponding
	dwell-time condition is given by (\ref{Maximum Dwell-Time Condition}).
	Since $\left[0,\infty\right)=\underset{s\in\mathbb{Z}_{\geq0}}{\bigcup}\left[t_{s}^{i},t_{s+1}^{i}\right)$                                         
	where $t_{0}^{i}=0$ and $\left\Vert e_{1,i}\left(t\right)\right\Vert \leq V_{\textrm{T}}$
	over each $\left[t_{s}^{i},t_{s+1}^{i}\right)$ provided the leader
	continuously satisfies the dwell-time condition in (\ref{Maximum Dwell-Time Condition}),                                            
	then $\left\Vert e_{1,i}\left(t\right)\right\Vert \leq V_{\textrm{T}}$ for
	all $t\in\left[0,\infty\right)$.
\end{proof}

\begin{theorem}
        The error signal in (\ref{e2i})
		is globally asymptotically regulated provided the leader satisfies 
		both the maximum dwell-time condition in (\ref{Maximum Dwell-Time Condition})
		and the minimum dwell-time condition in
		\begin{equation}
		t_{s+1}^{i}-t_{s}^{i}>\frac{1}{k_{i}}\ln\left(\frac{\left\Vert e_{2,i}\left(t_{s}^{i}\right)\right\Vert }{\left\Vert e_{2,i}\left(t_{s}^{i}\right)\right\Vert -V_{\textrm{T}}}\right)
		\label{Minimum Dwell-Time Condition}
		\end{equation}
		for all $s\in\mathbb{Z}_{>0}$ such that $s<\bar{s}$ ($\bar{s}$
		denotes the index of $t_{\bar{s}}^{i}$ where $\left\Vert e_{2,i}\left(t_{\bar{s}}^{i}\right)\right\Vert \leq V_{\textrm{T}}$
		first holds), and $V_{\textrm{T}}\in\Big(0,\frac{R_{\textrm{g}}}{2\lambda_{\textrm{max}}(C)}\Big]$.
	\label{Theorem 2} 
\end{theorem}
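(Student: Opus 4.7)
The plan mirrors Theorem 1 but applied to $e_{2,i}$: bound its decay on each interval $[t_s^i, t_{s+1}^i)$ between servicings, quantify the jump induced by the observer reset at $t_{s+1}^i$, and close the loop using the minimum dwell-time condition. I would start from the quadratic Lyapunov candidate $V_{2,i}(e_{2,i}) \triangleq \tfrac{1}{2} e_{2,i}^{T} e_{2,i}$; differentiating along (\ref{e2i Dot closed-loop}) yields $\dot{V}_{2,i} = -2 k_i V_{2,i}$, and the Comparison Lemma produces the closed-form decay $\norm{e_{2,i}(t)} = e^{-k_i(t - t_s^i)} \norm{e_{2,i}(t_s^i)}$ on the open interval.

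The substantive step is the jump analysis at the servicing instant. Since (\ref{Reset}) forces $\hat{x}_i(t_{s+1}^i) = x_i(t_{s+1}^i)$ while $x_i$ is continuous, comparing pre- and post-reset values of $e_{2,i} = x_\textrm{g} - \hat{x}_i$ yields the identity $e_{2,i}(t_{s+1}^i) = e_{2,i}((t_{s+1}^i)^-) + e_{1,i}((t_{s+1}^i)^-)$. Combining this with the previous decay bound and the uniform bound $\norm{e_{1,i}(t)} \leq V_\textrm{T}$ from Theorem 1 gives
\[
\norm{e_{2,i}(t_{s+1}^i)} \leq e^{-k_i (t_{s+1}^i - t_s^i)} \norm{e_{2,i}(t_s^i)} + V_\textrm{T}.
\]
The minimum dwell-time condition (\ref{Minimum Dwell-Time Condition}) is algebraically equivalent to $e^{-k_i(t_{s+1}^i - t_s^i)} \norm{e_{2,i}(t_s^i)} < \norm{e_{2,i}(t_s^i)} - V_\textrm{T}$, so substitution gives the strict contraction $\norm{e_{2,i}(t_{s+1}^i)} < \norm{e_{2,i}(t_s^i)}$ whenever $s < \bar{s}$.

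To upgrade strict decrease of the sampled sequence $\{\norm{e_{2,i}(t_s^i)}\}$ into $\norm{e_{2,i}(t)} \to 0$ on the continuous time axis, the remaining plan is to show $\bar{s}$ is finite and then invoke the feedback region. Finiteness of $\bar{s}$ comes from coupling the minimum condition with the maximum dwell-time condition (\ref{Maximum Dwell-Time Condition}): the latter prevents $t_{s+1}^i - t_s^i$ from diverging as $\norm{e_{2,i}(t_s^i)}$ approaches $V_\textrm{T}$ from above, which keeps the per-step decrement bounded away from zero and forces the sequence to cross the threshold $V_\textrm{T}$ in finitely many servicings. Once $\norm{e_{2,i}(t_{\bar{s}}^i)} \leq V_\textrm{T}$, the triangle inequality combined with Theorem 1 gives $\norm{Cx_i - Cx_\textrm{g}} \leq \lambda_{\textrm{max}}(C)\bigl(\norm{e_{1,i}} + \norm{e_{2,i}}\bigr) \leq 2\lambda_{\textrm{max}}(C) V_\textrm{T} \leq R_\textrm{g}$, placing the follower inside the feedback region; the continuous state updates available there drive $e_{1,i}$ and the subsequent reset jumps to zero, and the residual dynamics $\dot{e}_{2,i} = -k_i e_{2,i}$ then carries $\norm{e_{2,i}(t)}$ to zero. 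The main obstacle will be this finiteness-of-$\bar{s}$ argument, since strict monotonicity of a sequence bounded below by $V_\textrm{T}$ does not on its own guarantee that the bound is crossed; quantifying the interplay between the maximum and minimum dwell-time constraints is what does the real work, after which the Lyapunov and reset-jump bookkeeping complete the proof routinely.
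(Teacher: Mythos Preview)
Your proposal is correct and follows essentially the same route as the paper: the same quadratic Lyapunov candidate $V_{2,i}=\tfrac12 e_{2,i}^Te_{2,i}$, the same exponential decay $\|e_{2,i}(t)\|=\|e_{2,i}(t_s^i)\|e^{-k_i(t-t_s^i)}$ on each inter-service interval, the same jump identity $e_{2,i}(t_{s+1}^i)=\lim_{t\to(t_{s+1}^i)^-}e_{2,i}(t)+\lim_{t\to(t_{s+1}^i)^-}e_{1,i}(t)$ bounded via Theorem~1, and the same feedback-region endgame using $V_\textrm{T}\le R_\textrm{g}/(2\lambda_{\max}(C))$. The only notable difference is that you explicitly flag and sketch an argument for the finiteness of $\bar{s}$ (via incompatibility of the diverging minimum dwell-time with the bounded maximum dwell-time), whereas the paper simply asserts ``Observe that there exists some $t_{\bar{s}}^i$ such that $\|e_{2,i}(t_{\bar{s}}^i)\|\le V_\textrm{T}$'' without further justification.
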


\begin{proof}
	Suppose the leader satisfies the dwell-time condition in (\ref{Maximum Dwell-Time Condition})
	for all $s\in\mathbb{Z}_{\geq0}$. Consider the common Lyapunov functional
	$V_{2,i}:\mathbb{R}^{m}\rightarrow\text{\ensuremath{\mathbb{R}}}_{\geq0}$
	\begin{equation}
	V_{2,i}\left(e_{2,i}\left(t\right)\right)\triangleq\frac{1}{2}e_{2,i}^{T}\left(t\right)e_{2,i}\left(t\right).
	\label{V2}
	\end{equation}
	By (\ref{Follower Dynamics}) and (\ref{Reset}),
	(\ref{e2i}) is continuously differentiable over $\left[t_{s}^{i},t_{s+1}^{i}\right)$. 
	Substituting (\ref{e2i Dot closed-loop}) when $\text{ }t\in\left[t_{s}^{i},t_{s+1}^{i}\right)$
	into the time-derivative of (\ref{V2}) yields 
	\begin{equation}
	\dot{V}_{2,i}\left(e_{2,i}\left(t\right)\right)=-k_{i}e_{2,i}^{T}\left(t\right)e_{2,i}\left(t\right)\label{V2 Dot}
	\end{equation}
	where substituting (\ref{V2}) into (\ref{V2 Dot}) yields
	\begin{equation}
	\dot{V}_{2,i}\left(e_{2,i}\left(t\right)\right)=-2k_{i}V_{2,i}\left(e_{2,i}\left(t\right)\right).\label{V2 Dot 2}
	\end{equation}
	The solution of (\ref{V2 Dot 2}) over $\left[t_{s}^{i},t_{s+1}^{i}\right)$
	is given by $V_{2,i}\left(e_{2,i}\left(t\right)\right)=V_{2,i}\left(e_{2,i}\left(t_{s}^{i}\right)\right)e^{-2k_{i}\left(t-t_{s}^{i}\right)}$
	where substituting (\ref{V2}) results in 
	\begin{equation}
	\left\Vert e_{2,i}\left(t\right)\right\Vert =\left\Vert e_{2,i}\left(t_{s}^{i}\right)\right\Vert e^{-k_{i}\left(t-t_{s}^{i}\right)}.\label{e2i Solution}
	\end{equation}
	Observe that $e_{2,i}\left(t_{s}^{i}\right)$ is finite since $x_{i}\left(t_{s}^{i}\right)$
	is a measured quantity provided by the leader where (\ref{e2i Solution})
	implies $e_{2,i}\left(t\right)$ is bounded over $\left[t_{s}^{i},t_{s+1}^{i}\right).$
	Moreover, the RHS of follower $i's$ dynamics in (\ref{Follower Dynamics})
	are Lebesgue measurable and locally essentially bounded. Therefore,
	there exists a Filippov solution $x_{i}\left(t\right)$ that is absolutely
	continuous over $\left[0,\infty\right).$ Now, consider $t\in\left[t_{s}^{i},t_{s+1}^{i}\right).$
	The jump discontinuity of $e_{2,i}\left(t\right)$ at $t_{s+1}^{i}$
	is given by $\Omega_{i}\left(t_{s+1}^{i}\right)\triangleq e_{2,i}\left(t_{s+1}^{i}\right)-\underset{t\rightarrow\left(t_{s+1}^{i}\right)^{-}}{\mathrm{lim}}e_{2,i}\left(t\right)$
	where $e_{2,i}\left(t_{s+1}^{i}\right)$ is defined by (\ref{e2i Reset})
	and $\underset{t\rightarrow\left(t_{s+1}^{i}\right)^{-}}{\mathrm{lim}}e_{2,i}\left(t\right)$
	denotes the limit of $e_{2,i}\left(t\right)$ as $t\rightarrow t_{s+1}^{i}$
	from the left. Since $\Omega_{i}\left(t_{s+1}^{i}\right)=\underset{t\rightarrow\left(t_{s+1}^{i}\right)^{-}}{\mathrm{lim}}e_{1,i}\left(t\right)$
	and $\left\Vert \cdot\right\Vert $ is continuous over $\mathbb{R}$,
	then by Theorem \ref{Theorem 1} $\left\Vert \Omega_{i}\left(t_{s+1}^{i}\right)\right\Vert \leq V_{\textrm{T}}$. It then follows that the magnitude of the jump discontinuity is
	bounded by 
	\begin{equation}
	\left|\left\Vert e_{2,i}\left(t_{s+1}^{i}\right)\right\Vert -\underset{t\rightarrow\left(t_{s+1}^{i}\right)^{-}}{\mathrm{lim}}\left\Vert e_{2,i}\left(t\right)\right\Vert \right|\leq V_{\textrm{T}}.\label{Jump Discontinuity Magntiude}
	\end{equation}
Since $\left\Vert e_{2,i}\left(t\right)\right\Vert $ is strictly
decreasing over $\left[t_{s}^{i},t_{s+1}^{i}\right)$ by (\ref{e2i Solution}),
then $\left\Vert e_{2,i}\left(t\right)\right\Vert \leq\left\Vert e_{2,i}\left(t_{s}^{i}\right)\right\Vert $
for all $t\in\left[t_{s}^{i},t_{s+1}^{i}\right).$ The reset map in
(\ref{Reset}) may induce an instantaneous growth in (\ref{e2i})
at $t_{s+1}^{i}$ where (\ref{Jump Discontinuity Magntiude}) implies
$\left\Vert e_{2,i}\left(t_{s+1}^{i}\right)\right\Vert \leq V_{\textrm{T}}+\left\Vert e_{2,i}\left(t_{s}^{i}\right)\right\Vert e^{-k_{i}\left(t_{s+1}^{i}-t_{s}^{i}\right)}.$
Therefore, the minimum dwell-time condition given by  (\ref{Minimum Dwell-Time Condition})
can ensure that $\left\Vert e_{2,i}\left(t_{s}^{i}\right)\right\Vert >\left\Vert e_{2,i}\left(t_{s+1}^{i}\right)\right\Vert$,
which is valid when $\left\Vert e_{2,i}\left(t_{s}^{i}\right)\right\Vert >V_{\textrm{T}}>0$.
\textcolor{black}{Observe that there exists some $t_{\bar{s}}^{i}\in\mathbb{R}_{>0}$ such that $\left\Vert e_{2,i}\left(t_{\bar{s}}^{i}\right)\right\Vert \leq V_{\textrm{T}}$.
    Provided the leader
	satisfies the maximum dwell-time condition in (\ref{Maximum Dwell-Time Condition})
	for all $t\le t_{\bar{s}}^{i}$, then $\left\Vert Cx_{\textrm{g}}-y_{i}\left(t_{\bar{s}}^{i}\right)\right\Vert \leq\lambda_{\textrm{max}}(C)\left\Vert e_{2,i}\left(t_{\bar{s}}^{i}\right)\right\Vert +\lambda_{\textrm{max}}(C)\left\Vert e_{1,i}\left(t_{\bar{s}}^{i}\right)\right\Vert \leq2\lambda_{\textrm{max}}(C)V_{\textrm{T}}.$
	Hence, by selecting $V_{\textrm{T}}\in\Big(0,\frac{R_{\textrm{g}}}{2\lambda_{\textrm{max}}(C)}\Big]$, it
	follows that $\left\Vert Cx_{\textrm{g}}-y_{i}\left(t_{\bar{s}}^{i}\right)\right\Vert \leq R_{\textrm{g}}$, and 
	 follower $i$ will be inside the feedback region after $t_{\bar{s}}^{i}$. Moreover, $\left\Vert e_{1,i}\left(t\right)\right\Vert =0$
	and $\left\Vert e_{2,i}\left(t\right)\right\Vert =\left\Vert e_{2,i}\left(t_{\bar{s}}^{i}\right)\right\Vert e^{-k_{i}\left(t-t_{\bar{s}}^{i}\right)}$
	for all $t\geq t_{\bar{s}}^{i}$. Thus, $\left\Vert e_{2,i}\left(t\right)\right\Vert \rightarrow0$
	as $t\rightarrow\infty.$} Since (\ref{V2}) does not have a restricted
domain and is radially unbounded, then the stability result is global.
\end{proof} 

\begin{remark}
	The proof of Theorem \ref{Theorem 2} formally excludes Zeno behavior.          
\end{remark}

\begin{remark}
	From Theorem \ref{Theorem 1} and Theorem \ref{Theorem 2}, for stability and consensus, for any $i$ and $s$,
	\begin{align}
	\begin{split}
	&\frac{1}{\lambda_{\textrm{max}}\left(A\right)}\ln\left(\frac{\lambda_{\textrm{max}}\left(A\right)V_{\textrm{T}}}{\overline{d}_{i}}+1\right)\ge
	\frac{1}{k_{i}}\ln\left(\frac{\left\Vert e_{2,i}\left(t_{s}^{i}\right)\right\Vert }{\left\Vert e_{2,i}\left(t_{s}^{i}\right)\right\Vert -V_{\textrm{T}}}\right).
	\end{split}
	\end{align}
\end{remark}

With Theorem \ref{Theorem 1} and Theorem \ref{Theorem 2}, we provide the following theorem for achieving consensus of the followers.

\begin{theorem}
	\label{Theorem 3} The states of the followers asymptotically reach consensus to $x_{\textrm{g}}$ if the maximum dwell-time condition in (\ref{Maximum Dwell-Time Condition}) and the minimum dwell-time condition in (\ref{Minimum Dwell-Time Condition}) are satisfied or all $t_{s}^{i}\le t_{\bar{s}}^{i}$ ($i\in F$), and $V_{\textrm{T}}\in\Big(0,\frac{R_{\textrm{g}}}{2\lambda_{\textrm{max}}(C)}\Big]$.
\end{theorem}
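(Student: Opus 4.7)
The plan is to reduce consensus at $x_{\textrm{g}}$ to the asymptotic convergence of each individual follower's state $x_i(t)$ to $x_{\textrm{g}}$, and then obtain the latter directly from Theorems \ref{Theorem 1} and \ref{Theorem 2}. The key algebraic observation, immediate from (\ref{e1i}) and (\ref{e2i}), is the decomposition $x_{\textrm{g}}-x_i(t)=e_{2,i}(t)+e_{1,i}(t)$, which gives the triangle bound
\begin{equation}
\left\Vert x_i(t)-x_{\textrm{g}}\right\Vert \le \left\Vert e_{1,i}(t)\right\Vert +\left\Vert e_{2,i}(t)\right\Vert
\nonumber
\end{equation}
uniformly in $t$ and $i$. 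So it suffices to drive both error terms to zero for every $i\in F$.

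First I would invoke Theorem \ref{Theorem 2}: under the hypotheses of the present statement (both dwell-time conditions for all $s<\bar{s}$, and $V_{\textrm{T}}$ chosen in the prescribed interval), we have $\|e_{2,i}(t)\|\to 0$ as $t\to\infty$ for every $i\in F$. Next, for the $e_{1,i}$ term, Theorem \ref{Theorem 1} alone only furnishes the uniform bound $\|e_{1,i}(t)\|\le V_{\textrm{T}}$, which is not enough; however, the last part of the proof of Theorem \ref{Theorem 2} has already shown that there exists a finite index $\bar{s}$ with $\|Cx_{\textrm{g}}-y_i(t_{\bar{s}}^{i})\|\le R_{\textrm{g}}$, i.e., follower $i$ enters the feedback region in finite time. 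From that moment on, the feedback region described in Section \ref{sec_sensing} delivers absolute state information to follower $i$ continuously, so the observer reset in (\ref{Reset}) applies at every instant and $\|e_{1,i}(t)\|=0$ for all $t\ge t_{\bar{s}}^{i}$. Combining the two limits in the triangle inequality above yields $\|x_i(t)-x_{\textrm{g}}\|\to 0$ for each $i\in F$.

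Finally, I would close with the consensus conclusion: for any pair $i,j\in F$,
\begin{equation}
\left\Vert x_i(t)-x_j(t)\right\Vert \le \left\Vert x_i(t)-x_{\textrm{g}}\right\Vert +\left\Vert x_{\textrm{g}}-x_j(t)\right\Vert \longrightarrow 0
\nonumber
\end{equation}
as $t\to\infty$, which is exactly asymptotic consensus of the followers to $x_{\textrm{g}}$.

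The main obstacle I anticipate is the step that promotes $\|e_{1,i}\|$ from merely bounded (Theorem \ref{Theorem 1}) to zero after $t_{\bar{s}}^{i}$. This step is not a restatement of Theorem \ref{Theorem 1} but relies on the structural fact that entry into the feedback region switches the $i$th follower from intermittent leader-supplied feedback to continuous self-localization via the region, so that the jump-reset mechanism in (\ref{Reset}) effectively occurs at every time. Everything else is a routine application of the triangle inequality and the already established convergence of $e_{2,i}$.
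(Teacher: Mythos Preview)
Your proposal is correct and follows essentially the same route as the paper: decompose $x_{\textrm{g}}-x_i=e_{1,i}+e_{2,i}$, use Theorem~\ref{Theorem 2} for $\|e_{2,i}\|\to 0$, and use the feedback-region mechanism (established in the proof of Theorem~\ref{Theorem 2}) to upgrade the Theorem~\ref{Theorem 1} bound on $e_{1,i}$ to $e_{1,i}\equiv 0$ after $t_{\bar{s}}^{i}$. The paper re-derives the ``follower enters the feedback region'' step inline via $\|Cx_{\textrm{g}}-y_i(T_i)\|\le 2\lambda_{\max}(C)V_{\textrm{T}}\le R_{\textrm{g}}$ rather than citing it, and omits your extra pairwise consensus inequality, but these are presentational differences only.
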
	

\begin{proof} 
Let $i\in F$. By Theorem \ref{Theorem 1}, if the maximum dwell-time condition in (\ref{Maximum Dwell-Time Condition}) is satisfied, then
$\left\Vert e_{1,i}\left(t\right)\right\Vert \leq V_{\textrm{T}}$ for all                           
$t\geq0$. By Theorem \ref{Theorem 2}, if the minimum dwell-time condition in (\ref{Minimum Dwell-Time Condition}) is satisfied or all $t_{s}^{i}\le t_{\bar{s}}^{i}$ ($i\in F$), then there exists a time $T_{i}\in\mathbb{R}_{>0}$ such that $\left\Vert e_{2,i}\left(T_{i}\right)\right\Vert \leq V_{\textrm{T}}$. Therefore, $\left\Vert Cx_{\textrm{g}}-y_{i}\left(T_{i}\right)\right\Vert \leq\lambda_{\textrm{max}}(C)\left\Vert e_{1,i}\left(T_{i}\right)\right\Vert +\lambda_{\textrm{max}}(C)\left\Vert e_{2,i}\left(T_{i}\right)\right\Vert \leq 2\lambda_{\textrm{max}}(C)V_{\textrm{T}}\le R_{\textrm{g}}$ as $V_{\textrm{T}}\in\Big(0,\frac{R_{\textrm{g}}}{2\lambda_{\textrm{max}}(C)}\Big]$.
Then for $t\ge T_{i}$, follower $i$ will be inside the feedback region where $\left\Vert e_{1,i}\left(t\right)\right\Vert =0$. Moreover,
$\left\Vert e_{2,i}\left(t\right)\right\Vert\rightarrow0$
as $t\rightarrow\infty$, so $\left\Vert x_{\textrm{g}}-x_{i}\left(t\right)\right\Vert=\left\Vert e_{1,i}\left(t\right)\right\Vert+\left\Vert e_{2,i}\left(t\right)\right\Vert=\left\Vert e_{2,i}\left(t\right)\right\Vert\rightarrow0$
as $t\rightarrow\infty$.
\end{proof}

\section{Controller Synthesis with Intermittent Communication and MTL Specifications}
\label{sec_reactive}
In this section, we provide the framework and algorithms for controller synthesis of the leader to satisfy the maximum and minimal dwell-time conditions and the practical constraints. The controller synthesis is conducted iteratively as the state estimates for the followers are reset to the true state values whenever they are serviced by the leader, and thus the control inputs need to be recomputed with the reset values.

We assume that the communication is only possible at discrete time instants, with $T_{\textrm{s}}$ time periods apart and controlled by the communication switching signal $\zeta_i$. We define the discrete time set $\mathbb{T}_{\textrm{d}}\triangleq\{t[0], t[1], \dots\}$, where $t[j]=jT_{\textrm{s}}$ for $j\in\mathbb{I}$. The maximum dwell-time $\frac{1}{\lambda_{\textrm{max}}\left(A\right)}\ln\left(\frac{\lambda_{\textrm{max}}\left(A\right)V_{\textrm{T}}}{\overline{d}_{i}}+1\right)$ in (\ref{Maximum Dwell-Time Condition}) for robot $i$ $(i=1,\dots,Q)$ is in the interval $[n_iT_s, (n_i+1)T_{\textrm{s}})$ and the minimum dwell-time $\frac{1}{k_{i}}\ln\left(\frac{\left\Vert e_{2,i}\left(t_{0}^{i}\right)\right\Vert }{\left\Vert e_{2,i}\left(t_{0}^{i}\right)\right\Vert -V_{\textrm{T}}}\right)$ in (\ref{Minimum Dwell-Time Condition}) is in the interval $[(m_i-1)T_{\textrm{s}}, m_iT_s)$. 
We use the following MTL specifications for encoding the maximum dwell-time condition and the minimum dwell-time condition ($\eta\in[0, R)$ is a user-defined parameter):
\begin{align}
\begin{split}
&\phi_1=\bigwedge_{1\le i\le Q}\big(\Box\Diamond_{[0, n_i]}\norm{y_0-\hat{y}_i}\le\eta\big),\\
&\phi_2=\bigwedge_{1\le i\le Q}\big(\Box(\norm{y_0-\hat{y}_i}\le\eta\Rightarrow \Box_{[1, m_i]}\norm{y_0-\hat{y}_i}>\eta)\big),
\end{split}
\end{align} 
where $\phi_1$ means ``for any follower $i$, the leader needs to be within $\eta$ distance from the estimated position of the follower $i$ at least once in any $n_iT_s$ time periods'', and $\phi_2$ means ``each time the leader is within $\eta$ distance from the estimated position of the follower $i$, it should not be within $\eta$ distance from the estimated position of the follower $i$ again for the next $m_iT_s$ time periods''. 

The leader also needs to satisfy an MTL specification $\phi_{\textrm{p}}$ for the practical constraints. One example of $\phi_{\textrm{p}}$ is as follows:
\begin{align}
&\phi_{\textrm{p}}=\Box\Diamond_{[0, c]}\big((y_0\in G_1)\vee (y_0\in G_2)\big)\wedge\Box (y_0\in D).
\end{align} 
which means ``the leader robot needs to reach the charging station $G_1$ or $G_2$ at least once in any $cT_{\textrm{s}}$ time periods, and it should always remain in the region $D$''.

Combining $\phi_1$, $\phi_2$ and $\phi_{\textrm{p}}$, the MTL specification for the leader is                                  
$\phi=\phi_1\wedge\phi_2\wedge\phi_{\textrm{p}}$.

We use $[\phi]^{\ell}_{j}$ to denote the formula modified from the MTL formula $\phi$ when $\phi$ is evaluated at time index $j$ and the current time index is $\ell$.  $[\phi]^{\ell}_{j}$ can be calculated recursively as follows (we use $\pi_{j}$ to denote the atomic predicate $\pi$ evaluated at time index $j$):
\begin{align}
\begin{split}                     
[\pi]^{\ell}_{j} =&  
\begin{cases}
\pi_{j},& \mbox{if $j>\ell$}\\  	
\top,& \mbox{if $j\le \ell$ and $y^{j}\in\mathcal{O}(\pi)$}\\  
\bot,& \mbox{if $j\le \ell$ and $y^{j}\not\in\mathcal{O}(\pi)$}                                                                                                                                         
\end{cases}\\
[\neg\phi]^{\ell}_{j}  :=&\neg[\phi]^{\ell}_{j}\\
[\phi_1\wedge\phi_2]^{\ell}_{j}:=&[\phi_1]^{\ell}_{j}\wedge[\phi_2]^{\ell}_{j}\\
[\phi_1\mathcal{U}_{\mathcal{I}}\phi_2]^{\ell}_{j} :=&\bigvee_{j'\in (j+\mathcal{I})}\Big([\phi_2]^{\ell}_{j'}\wedge\bigwedge_{j\le j''<j'}[\phi_1]^{\ell}_{j''}\Big).
\end{split}
\label{update_phi}
\end{align}
If the MTL formula $\phi$ is evaluated at the initial time index (which is the usual case when the task starts at the initial time), then the modified formula is $[\phi]^{\ell}_{0}$. 

Algorithm 1 shows the controller synthesis approach with intermittent communication and MTL specifications. 
The controller synthesis problem can be formulated as a sequence of mixed integer linear programming (MILP) problems, denoted as MILP-sol in Line \ref{MILP-sol} and expressed as follows:
\begin{align}
\underset{\mathbf{u}^{\ell}_0}{\argmin} ~ & J(\mathbf{u}^{\ell}_0)=\norm{\mathbf{u}^{\ell}_0}   \\
\text{subject to:} ~ 
& x_0^{j+1}=\bar{A}_0x_0^{j}+\bar{B}^0u_0^j, ~y_0^{j}=\bar{C}_0x_0^{j}, \nonumber \\ & ~~~\forall i=1,\dots,Q, \forall j=\ell,\dots,\ell+N-1,\\
& \hat{x}_i^{j+1}=\bar{A}x_i^{j}+\bar{B}u_i^j, ~\hat{y}_i^{j}=\bar{C}\hat{x}_i^{j}, \nonumber \\ & ~~~ \forall i=1,\dots,Q, \forall j=\ell,\dots,\ell+N-1, \label{update_constraint}\\
& u_{0,\textrm{min}}\le u_0^j\le u_{0,\textrm{max}}, \forall i=1,\dots,Q, \nonumber \\ & ~~~~~~~~~~~~~~\forall j=\ell,\dots,\ell+N,\\
& (\tilde{y}^{\ell:\ell+N-1}, 0)\models_{\textrm{W}} [\phi]^{\ell}_{0}, 
\label{MILP}               
\end{align} 
where the time index $\ell$ is initially set as 0, $N\in\mathbb{Z}_{>0}$ is the number of time instants in the control horizon, $\tilde{y}^{\ell:\ell+N-1}=[y_0^{\ell:\ell+N-1}, \hat{y}_1^{\ell:\ell+N-1}, \dots, \hat{y}_Q^{\ell:\ell+N-1}]$, \(\mathbf{u}^{\ell}_0 = [u^{\ell}_0, u^{\ell+1}_0, \cdots, u^{\ell+N-1}_0]\) is the control input signal of the leader, the input values are constrained to $[u_{0, \textrm{min}}, u_{0, \textrm{max}}]$, $\bar{A}^0$, $\bar{B}^0$, $\bar{C}^0$, $\bar{A}$, $\bar{B}$ and $\bar{C}$ are converted from $A^0$, $B^0$, $C^0$, $A$, $B$ and $C$ respectively for the discrete-time state-space representation, and $u_i^j$ are follower control inputs from (\ref{Follower Control}). Note that we only require the trajectory $y_0^{\ell:\ell+N-1}$ to weakly satisfy $\phi$ as $\ell+N-1$ may be less than the necessary length $\norm{\phi}$.

At each time index $\ell$, we check if there exists any follower that is being serviced (Line \ref{check_serve}). If there are such followers, we update the state estimates of those followers with their true state values (Line \ref{update}). Then we modify the MTL formula as in (\ref{update_phi}) and the updated $m_i$ (Line \ref{update_MTL}). The MILP is solved for time $\ell$ with the updated state values and the modified MTL formula $[\phi]^{\ell}_{0}$ (Line \ref{recompute}). The previously computed leader control inputs are replaced by the newly computed control inputs from time index $\ell$ to $\ell+N-1$ (Line \ref{replace}).

\begin{algorithm}[tp]
	\caption{Controller synthesis of MASs with intermittent communication and MTL specifications.}                                                                   
	\label{MTLalg}
	\begin{algorithmic}[1]
		\State \textbf{Inputs:}  $x_0^{0}$, $x_i^{0}$, $\phi$, $x_{\textrm{g}}$, $R_{\textrm{g}}$, $V_{\textrm{T}}$, $\eta$, $T_s$, $k_i$
		\State $\ell\gets0$
		\State Solve MILP-sol to obtain the optimal inputs $u_{0}^{\ast q}~(q=0,1,\dots,N-1)$ \label{MILP-sol}
		\While{$\norm{Cx_{\textrm{g}}-y_i(t[\ell])}>R_{\textrm{g}}$ for some $i$}
		\State $\mathcal{W}=\{i~\vert~\norm{y_0-\hat{y}_i(t[\ell])}\le\eta\}$ \label{check_serve}
		\If{$\vert\mathcal{W}\vert\neq\emptyset$}
		\State 	$\forall i\in\mathcal{W}$, update $\hat{x}_i^{\ell}$ in constraint (\ref{update_constraint}) and change constraint (\ref{update_constraint}) as follows: \label{update}
		\[
		\begin{split}
		& \hat{x}_i^{j+1}=\bar{A}x_i^{j}+\bar{B}u_i^j, \forall i=1,\dots,Q, \nonumber \\ & ~~~~~~~~~~~~~~~~~~~~~~~~\forall j=\ell,\ell+1,\dots,\ell+N-1,\\
		& \hat{x}_i^{\ell}=x_i^{\ell}, \forall i\in\mathcal{W}
		\end{split}
		\]	
		\State Update $m_i$ in $\phi$ such that $\frac{1}{k_{i}}\ln\left(\frac{\left\Vert e_{2,i}\left(t[\ell]\right)\right\Vert }{\left\Vert e_{2,i}\left(t[\ell]\right)\right\Vert -V_{\textrm{T}}}\right)$ is in the interval $[(m_i-1)T_{\textrm{s}}, m_iT_s)$ \label{update_MTL}
		\State Re-solve MILP-sol to obtain the optimal inputs $u^{\ast\ell+q}~(q=0,1,\dots, N-1)$	\label{recompute}
		\State $u_{0}^{\ast\ell+q}\gets u^{\ast\ell+q}~(q=0,1,\dots, N-1)$				\label{replace}
		\EndIf		                                 
		\EndWhile    		                
		\State Return \(\mathbf{u}^{\ast0}=(u^{\ast0}_0, u^{\ast1}_0, \dots)\)
	\end{algorithmic}
\end{algorithm}

We use $\hat{t}_{s+1}^{i}$ to denote the $(s+1)^{th}$ time that $\norm{y_0(t)-y_i(t)}\le\eta$ holds in the discrete time set $\mathbb{T}_{\textrm{d}}$ for follower $i$\footnote{For $s=0,$ $\hat{t}_{0}^{i}$ is the initial time, i.e., $\hat{t}_{0}^{i}=0.$}, i.e.,
\begin{align}\nonumber
\hat{t}_{s+1}^{i}\triangleq &\inf\left\{ t\geq \hat{t}_{s}^{i}: (t\in\mathbb{T}_{\textrm{d}})\wedge\big(\left\Vert \hat{y}_{i}\left(t\right)-y_{0}\left(t\right)\right\Vert \leq \eta\big)\right\}.
\end{align}
We design the communication switching signal $\zeta_i$ as follows:
\begin{align}
\zeta_i(t)=&
\begin{cases}
1, ~~~~~~\mbox{if}~t=\hat{t}_{s}^{i}~\mbox{for~some}~s;   \\
0, ~~~~~~\mbox{otherwise}.
\end{cases}
\label{communication} 
\end{align}

Finally, we present Theorem \ref{Theorem 4} which provides theoretical guarantees for achieving correctness, stability and consensus (in Problem \ref{problem}).
\begin{theorem}
    With the observers in (\ref{Reset}),
	 follower controllers in (\ref{Follower Control}), communication switching signal in (\ref{communication}), if each optimization is feasible in Algorithm \ref{MTLalg} and $V_{\textrm{T}}\in\Big(0,  \min\{\frac{R_{\textrm{g}}}{2\lambda_{\textrm{max}}(C)}, \frac{R-\eta}{\lambda_{\textrm{max}}(C)}\}\Big]$ where $\eta\in[0, R)$,
	then Algorithm \ref{MTLalg} terminates within finite time, with the MTL specification $\phi$ weakly satisfied and the 
	followers asymptotically reaching consensus to the state $x_{\textrm{g}}$.
	\label{Theorem 4} 
\end{theorem}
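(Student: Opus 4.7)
The plan is to verify the three requirements of Problem~\ref{problem}—correctness, stability, and consensus—and then argue finite termination of Algorithm~\ref{MTLalg}. The natural strategy is to bootstrap Theorems~\ref{Theorem 1}--\ref{Theorem 3} using the MILP feasibility at each iteration, so that the MTL satisfaction (which is stated in terms of the \emph{estimated} positions $\hat y_i$) implies the continuous-time dwell-time conditions (which are stated in terms of the \emph{true} servicing instants $t_s^i$). Correctness is inherited directly: since each MILP call enforces $(\tilde y^{\ell:\ell+N-1},0)\models_{\textrm{W}}[\phi]^{\ell}_{0}$ and the modification rule (\ref{update_phi}) correctly propagates already-observed literals into $\top/\bot$, the concatenation of the applied control segments yields a trajectory of the leader that weakly satisfies $\phi$ at time $0$.

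The key connection step is to show that MTL satisfaction on $\hat y_i$ implies that a true service event occurs, i.e., the $\hat t_s^i$ generated by (\ref{communication}) coincide with admissible $t_s^i$ under the physical servicing condition $\|y_i-y_0\|\le R$. I would argue this by induction on $s$. The base case uses Assumption~\ref{Assumption 1} and $\hat t_0^i=t_0^i=0$. For the induction step, assume the dwell-time condition in (\ref{Maximum Dwell-Time Condition}) has held up to $\hat t_s^i$; then Theorem~\ref{Theorem 1} gives $\|e_{1,i}(t)\|\le V_{\textrm{T}}$ on $[0,\hat t_{s+1}^i]$, so
\begin{equation*}
\|y_0(\hat t_{s+1}^i)-y_i(\hat t_{s+1}^i)\| \le \|y_0-\hat y_i\| + \|C\,e_{1,i}\| \le \eta + \lambda_{\textrm{max}}(C)\,V_{\textrm{T}} \le R,
\end{equation*}
where the last inequality uses $V_{\textrm{T}}\le(R-\eta)/\lambda_{\textrm{max}}(C)$. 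Thus whenever $\phi_1$ forces $\|y_0-\hat y_i\|\le\eta$ at a discrete instant, the leader is physically within range, so setting $\zeta_i=1$ in (\ref{communication}) realizes a genuine service event. The MILP constraint coming from $\phi_1$ therefore enforces the discretized maximum dwell-time in (\ref{Maximum Dwell-Time Condition}), and the constraint from $\phi_2$ similarly enforces the minimum dwell-time in (\ref{Minimum Dwell-Time Condition}). With these in hand for every $i\in F$, Theorem~\ref{Theorem 3} applies and gives stability of $e_{1,i}$ together with asymptotic consensus of the followers to $x_{\textrm{g}}$.

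For finite termination, I would invoke Theorem~\ref{Theorem 2} to obtain a $t_{\bar s}^i$ at which $\|e_{2,i}(t_{\bar s}^i)\|\le V_{\textrm{T}}$; the bound $V_{\textrm{T}}\le R_{\textrm{g}}/(2\lambda_{\textrm{max}}(C))$ then yields $\|Cx_{\textrm{g}}-y_i(t_{\bar s}^i)\|\le R_{\textrm{g}}$, and the feedback-region dynamics keep follower $i$ inside thereafter. Since this occurs for every $i\in F$ after at most $\max_i t_{\bar s}^i$, the while-loop guard $\|Cx_{\textrm{g}}-y_i(t[\ell])\|>R_{\textrm{g}}$ eventually fails and Algorithm~\ref{MTLalg} returns in finitely many iterations.

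The main obstacle is the circularity inherent in the bootstrapping: the triangle-inequality estimate that validates service events relies on the bound $\|e_{1,i}\|\le V_{\textrm{T}}$ from Theorem~\ref{Theorem 1}, which itself needs the maximum dwell-time condition to already hold, which in turn is only enforced once service events are recognized as physically realizable. Resolving this requires a careful inductive argument over service intervals (as sketched above), together with an observation that the discretization chosen via $n_i,m_i$ is conservative—the floor for $n_i$ and the ceiling for $m_i$ guarantee that discrete-time satisfaction of $\phi_1\wedge\phi_2$ implies the continuous-time inequalities of Theorems~\ref{Theorem 1} and~\ref{Theorem 2}, so no gap is introduced at the sampling boundaries.
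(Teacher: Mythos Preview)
Your proposal is correct and follows essentially the same approach as the paper: an induction on $s$ establishing $\hat t_s^i=t_s^i$ via the triangle-inequality estimate $\|y_0-y_i\|\le\eta+\lambda_{\textrm{max}}(C)V_{\textrm{T}}\le R$, then invoking Theorems~\ref{Theorem 1}--\ref{Theorem 3} for stability, consensus, and finite termination at $\max_i t_{\bar s}^i$. Your explicit identification of the bootstrapping circularity and the conservative floor/ceiling discretization is, if anything, more thorough than the paper's terse treatment, which simply re-runs the one-interval analysis of Theorem~\ref{Theorem 1} inside the induction step.
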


\begin{proof}
We first use induction to prove that $\hat{t}_{s}^{i}=t_{s}^{i}$ holds for each $i$ and $s$. For each $i$, if $s=0$, then $\hat{t}_{0}^{i}=t_{0}^{i}=0$. Now assume that $\hat{t}_{s}^{i}=t_{s}^{i}$ holds and we prove that $\hat{t}_{s+1}^{i}=t_{s+1}^{i}$ holds. If each optimization is feasible in Algorithm \ref{MTLalg}, then $\hat{t}_{s+1}^{i}-\hat{t}_{s}^{i}=\hat{t}_{s+1}^{i}-t_{s}^{i}\le n_iT_{\textrm{s}} \leq\frac{1}{\lambda_{\textrm{max}}\left(A\right)}\ln\left(\frac{\lambda_{\textrm{max}}\left(A\right)V_{\textrm{T}}}{\overline{d}_{i}}+1\right)$. Then, following the analysis in the proof of Theorem \ref{Theorem 1}, we can derive that $\left\Vert e_{1,i}\left(\hat{t}_{s+1}^{i}\right)\right\Vert \leq V_{\textrm{T}}$. 
Thus, we have $\left\Vert y_{i}\left(\hat{t}_{s+1}^{i}\right)-y_{0}\left(\hat{t}_{s+1}^{i}\right)\right\Vert \leq \left\Vert Cx_{i}\left(\hat{t}_{s+1}^{i}\right)-C\hat{x}_{i}\left(\hat{t}_{s+1}^{i}\right)\right\Vert + \left\Vert C\hat{x}_{i}\left(\hat{t}_{s+1}^{i}\right)-Cx_{0}\left(\hat{t}_{s+1}^{i}\right)\right\Vert\le \lambda_{\textrm{max}}(C)V_{\textrm{T}}+\eta$. Therefore, if $V_{\textrm{T}}\le\frac{R-\eta}{\lambda_{\textrm{max}}(C)}$, then $\left\Vert y_{i}\left(\hat{t}_{s+1}^{i}\right)-y_{0}\left(\hat{t}_{s+1}^{i}\right)\right\Vert \leq R$. According to the communication switching signals in (\ref{communication}), we have $\zeta_i(\hat{t}_{s+1}^{i})=1$. Thus, from the definition of $t_{s+1}^{i}$ in Section \ref{sec_sensing}, we have $\hat{t}_{s+1}^{i}=t_{s+1}^{i}$ holds. Therefore, we have proven through induction that $\hat{t}_{s}^{i}=t_{s}^{i}$ hold for each $i$ and $s$. 
	
If each optimization is feasible in Algorithm \ref{MTLalg}, then the MTL specification $\phi$ is weakly satisfied. With $\hat{t}_{s}^{i}=t_{s}^{i}$, the maximum dwell-time condition in (\ref{Maximum Dwell-Time Condition}) and the minimum dwell-time condition in (\ref{Minimum Dwell-Time Condition}) are satisfied or all $t_{s}^{i}\le t_{\bar{s}}^{i}$ ($i\in F$). From Theorem \ref{Theorem 3}, if $V_{\textrm{T}}\in\Big(0,  \frac{R_{\textrm{g}}}{2\lambda_{\textrm{max}}(C)}\Big]$, then for each $i\in F$, there exists a time $T_i$ such that follower $i$ will be inside the feedback region for $t\ge T_{i}$. Thus, at time $\tilde{t}=\max_{i\in F}{T_i}$, $\norm{Cx_{\textrm{g}}-y_i(\tilde{t})}\le R_{\textrm{g}}$ holds for any $i\in F$, i.e., Algorithm \ref{MTLalg} is guaranteed to terminate within finite time.                                           
Finally, if $V_{\textrm{T}}\in\Big(0,  \min\{\frac{R_{\textrm{g}}}{2\lambda_{\textrm{max}}(C)}, \frac{R-\eta}{\lambda_{\textrm{max}}(C)}\}\Big]$, then the followers asymptotically reach consensus to $x_{\textrm{g}}$.
\end{proof}

\section{Implementation}
\label{sec_implementation}
We now demonstrate the controller synthesis approach on the example in Fig. \ref{fig_intro} (in Section \ref{sec_intro}). The leader is a quadrotor modeled as a three dimensional six degrees of freedom (6-DOF) rigid body \cite{zhe_advisory}. We denote the system state as $x^0_{\rm{q}}=[p_{\rm{q}},\dot{p}_{\rm{q}},\theta_{\rm{q}},\Omega_{\rm{q}}]^T\in\mathbb{R}^{12}$, where $p_{\rm{q}}=[x_{\rm{q},1},x_{\rm{q},2},x_{\rm{q},3}]^T$ and $\dot{p}_{\rm{q}}=[\dot{x}_{\rm{q},1},\dot{x}_{\rm{q},2},\dot{x}_{\rm{q},3}]^T$ are the position and velocity vectors of the quadrotor. The vector $\theta_{\rm{q}}=[\alpha_{\rm{q}},\beta_{\rm{q}},\gamma_{\rm{q}}]^T\in\mathbb{R}^{3}$ includes the roll, pitch and yaw Euler angles of the quadrotor. The vector $\Omega_{\rm{q}}\in\mathbb{R}^{3}$ includes the angular velocities rotating around its body frame axes. The general nonlinear dynamic model of such a quadrotor is given by                              
\begin{equation}\label{eqn_6DOFdynamics}       
\begin{array}{lll}                                
m_{\rm{q}}\ddot{p}_{\rm{q}}&=& r(\theta_{\rm{q}})T_{\rm{q}}\mathbf{e}_3-mg\mathbf{e}_3,\\
\dot{\theta}_{\rm{q}}&=&H(\theta_{\rm{q}})\Omega_{\rm{q}},\\
I\dot{\Omega}_{\rm{q}}&=&-\Omega_{\rm{q}}\times I\Omega_{\rm{q}}+\tau_{\rm{q}},
\end{array} 
\end{equation}
\textcolor{black}{where $m_{\rm{q}}$ is the mass, $g$ is the gravitational acceleration, $I$ is the inertia matrix, $r(\theta_{\rm{q}})$ is the rotation matrix representing the body frame with respect to the inertia frame (which is a function of the Euler angles), $H(\theta_{\rm{q}})$ is the nonlinear mapping matrix that projects the angular velocity $\Omega_{\rm{q}}$ to the Euler angle rate $\dot{\theta}_{\rm{q}}$, $\mathbf{e}_3=[0,0,1]^T$, $T_{\rm{q}}$ is the thrust of the quadrotor, and $\tau_{\rm{q}}\in\mathbb{R}^3$ is the torque on the three axes. The control input is $u_0=[u_{0, 1},u_{0, 2},u_{0, 3},u_{0, 4}]^T$, where $u_{0, 1}$ is the vertical velocity command, $u_{0, 2},u_{0, 3}$ and $u_{0, 4}$ are the angular velocity commands around its three body axes. The input values $u_{0, 1}, u_{0, 2}, u_{0, 3}$ and $u_{0, 4}$ are all bounded by $[-100, 100]$. By adopting the small-angle assumption and then linearizing the dynamic model around the hover state, a linear kinematic model can be obtained as follows:}
\begin{equation}
\begin{array}{ll}
\dot{x}_0=A_0x+B_0u_0, 
\end{array}
\end{equation}
where $x_0=[x_{\rm{q},1}, x_{\rm{q},2}, x_{\rm{q},3}, \dot{x}_{\rm{q},1}, \dot{x}_{\rm{q},2}, \alpha_{\rm{q}},\beta_{\rm{q}},\gamma_{\rm{q}}]$ is the state of the kinematic model of the quadrotor (leader), $A_0\in\mathbb{R}^{8\times8}$, and $B_0\in\mathbb{R}^{8\times4}$. For the 3-D position representation, $y_0=[x_{\rm{q},1}, x_{\rm{q},2}, x_{\rm{q},3}]^T$.

We use the following simplified dynamics for the followers
\begin{align}
\begin{aligned}
\dot{x}_{i,1} &= x_{i,1}+u_{i,1}+d_{i,1}, \\
\dot{x}_{i,2} &= x_{i,2}+u_{i,2}+d_{i,2},\\
\dot{x}_{i,3} &= 0,
\label{eq:sys-dyn-unicycle}
\end{aligned}
\end{align}                                                                                           
where $x_{i,1}$, $x_{i,2}$ and $x_{i,3}$ are the 3-D positions of follower $i$. Note that the vertical positions of the followers are constant.

For the state space representation, $x_i=[x_{i,1}, x_{i,2}, x_{i,3}]^T$, $u_i=[u_{i,1}, u_{i,2}, 0]^T$, $d_i=[d_{i,1}, d_{i,2}, 0]^T$ and $y_i=[x_{i,1}, x_{i,2}, x_{i,3}]^T$. The initial 3-D positions of the three followers are $[-20,-20,0]^T$, $[20,30,0]^T$ and $[40,-40,0]^T$, respectively. The initial 3-D position of the leader is $[-5,-30,5]^T$. The consensus state $x_{\textrm{g}}$ is set as $[0, 0,0]^T$. The random disturbances $d_{i}$ are bounded, i.e., $\left\Vert d_{i}\left(t\right)\right\Vert \leq\overline{d}_{i}$, where $\bar{d}_1=0.04, \bar{d}_2=0.03$ and $\bar{d}_3=0.02$.

For consensus, we consider the following control law from (\ref{Follower Control}):
\[
u_{i}\left(t\right)\triangleq -\hat{x}_i+k_{i}e_{2,i}\left(t\right),
\]
where $\hat{x}_{i}$ 
is the estimate of $x_{i}$, $k_1=0.1$, $k_2=0.15$ and $k_3=0.2$, respectively.

We consider two different scenarios with two different MTL specifications for the practical constraints.

\noindent\textit{Scenario 1}:\\
The leader needs to reach the charging station $G_1$ or $G_2$ at least once in any $6T_{\textrm{s}}$ time, and it should always remain in region $D$, where the two charging stations $G_1$ and $G_2$ are rectangular cuboids with length, width and height being 2, 2 and 5, centered at $[-20,10, 2.5]^T$ and $[25,0, 2.5]^T$, respectively, the region $D$ is a rectangular cuboid centered at $[0,0,7]^T$ with length, width and height being 30, 30 and 6, respectively (see Fig. \ref{fig_intro}). This specification is expressed as
\begin{align}\nonumber
&\phi^1_{\textrm{p}}=\Box\Diamond_{[0, 6]}\big((y_0\in G_1)\vee (y_0\in G_2)\big)\wedge\Box (y_0\in D).
\end{align} 

\noindent\textit{Scenario 2}:\\
The leader needs to reach the charging station $G_1$ or $G_2$ at least once in any $6T_{\textrm{s}}$ time, always remain in region $D$, and never stay in region $E$ for more than $2T_{\textrm{s}}$ time, where the region $E$ is a rectangular cuboid centered at $[0,0,6]^T$ with length, width and height being 15, 15 and 4, respectively. This specification is expressed as
\begin{align}\nonumber
\phi^2_{\textrm{p}}=&\Box\Diamond_{[0, 6]}\big((y_0\in G_1)\vee (y_0\in G_2)\big)\wedge\Box (y_0\in D)\\
&\wedge\lnot\Diamond\Box_{[0,2]} (y_0\in E).
\end{align} 

We set $R_{\textrm{g}}=R=5$, $V_{\textrm{T}}=1$,  $\eta=4$, $T_{\textrm{s}}=0.5$ and $N=20$. Fig. \ref{plot1} shows the simulation results in Scenario 1. The obtained input signals as shown in Fig. \ref{plot1} (a) gradually decrease as the followers approach $R_{\textrm{g}}$. Fig. \ref{plot1} (b) shows the 2-D planar plot of the trajectories of three followers
and a leader. $\Vert e_{i,1}(t)\Vert$ as shown in Fig. \ref{plot1} (c) is uniformly bounded by $V_{\textrm{T}}=1$. $\Vert e_{i,2}(t)\Vert$ as shown in Fig. \ref{plot1} (d) is monotonically decreasing when the followers approach consensus to $x_{\textrm{g}}$.  

Fig. \ref{plot2} shows the simulation results in Scenario 2. It can also be seen that $\Vert e_{i,1}(t)\Vert$ is uniformly bounded by $V_{\textrm{T}}=1$ and $\Vert e_{i,2}(t)\Vert$ is monotonically decreasing when the followers approach consensus to $x_{\textrm{g}}$. Note that with $\phi^2_{\textrm{p}}$, more control effort is needed to
satisfy the MTL specification after the followers arrive in region $E$ as the leader needs to get away from $E$ after each service to the followers.

\begin{figure}
	\centering
	\includegraphics[scale=0.2]{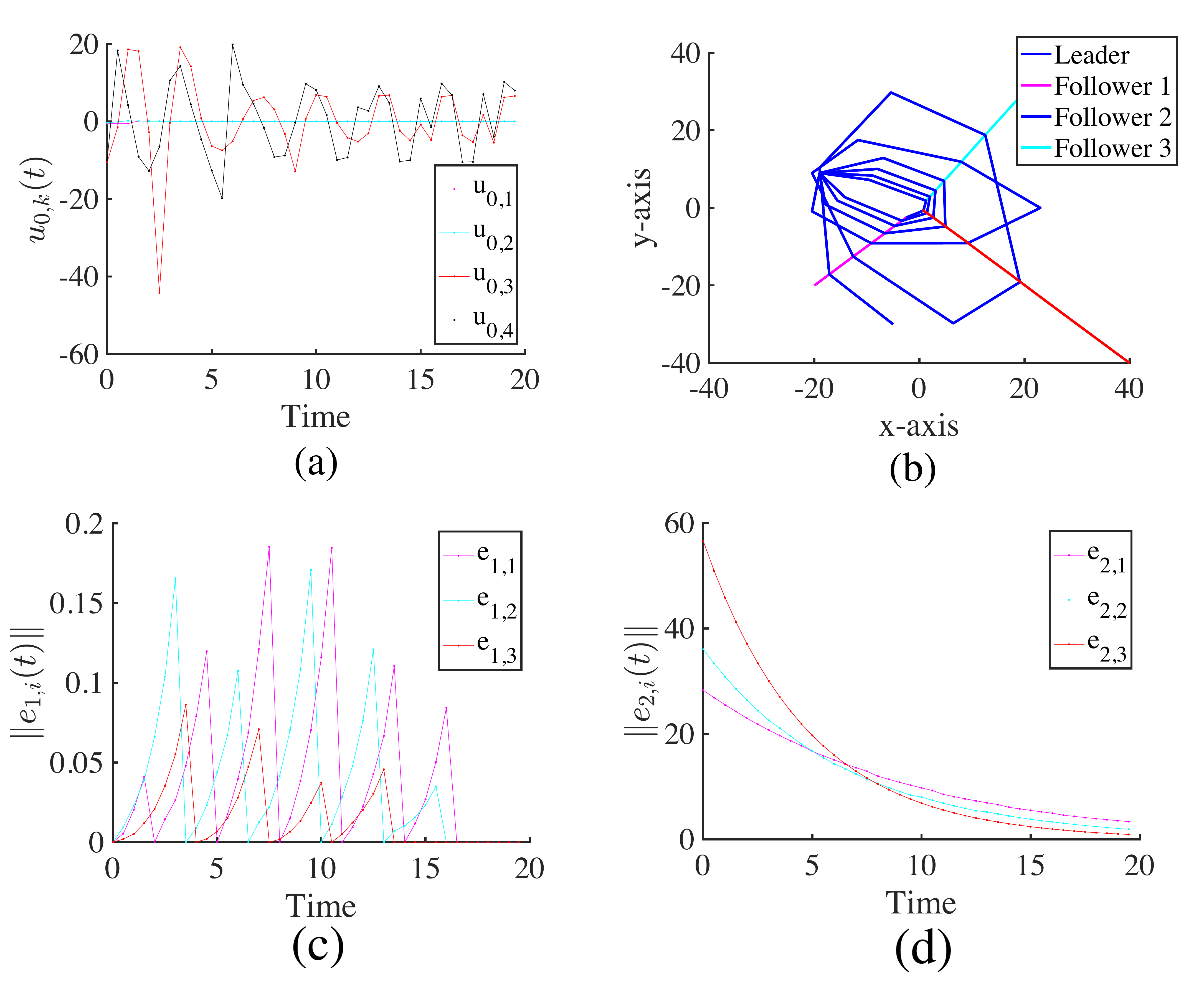}
	\caption{Results with MTL specification $\phi^1_{\textrm{p}}$ for the practical constraints: (a) the obtained optimal input signals; (b)
		2-D planar plot of the trajectories of three followers
		and a leader; (c) $\Vert e_{i,1}(t)\Vert$; (d) $\Vert e_{i,2}(t)\Vert$.}  
	\label{plot1}
\end{figure}

\begin{figure}
	\centering
	\includegraphics[scale=0.2]{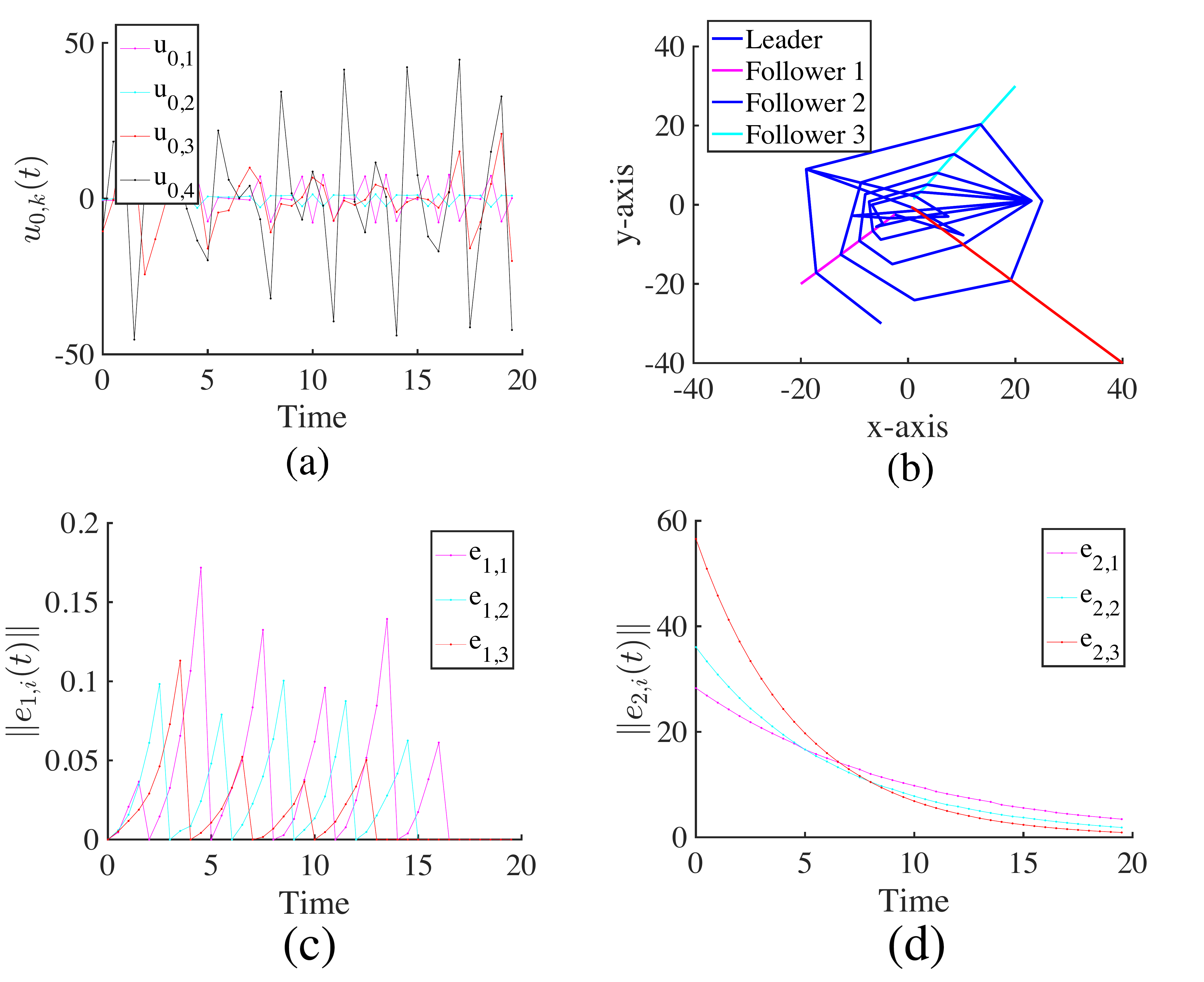}
	\caption{Results with MTL specification $\phi^2_{\textrm{p}}$ for the practical constraints: (a) the obtained optimal input signals; (b)
	2-D planar plot of the trajectories of three followers
	and a leader; (c) $\Vert e_{i,1}(t)\Vert$; (d) $\Vert e_{i,2}(t)\Vert$.}  
\label{plot2}
\end{figure}

\section{Conclusion}
\label{conclusion}
We presented a metric temporal logic approach for the controller synthesis of
a multi-agent system (MAS) with
intermittent communication. We iteratively solved a sequence of mixed-interger linear programmiung problems for provably achieving the correctness, stability of the switched system and consensus of the followers.
Future work will also extend the implementations to more realistic dynamic models for the followers and experiments on the hardware testbed.

\section{Acknowledgment}
This research is supported in part by AFRL award number FA9550-19-1-0169, DARPA award number D19AP00004, AFOSR award numbers FA9550-18-1-0109 and FA9550-19-1-0169, and NEEC award number N00174-18-1-0003. Any opinions, findings and conclusions or recommendations expressed in this material are those of the author(s) and do not necessarily reflect the views of the sponsoring agency.

\bibliographystyle{IEEEtran}
\bibliography{zherefclean_submit}

\end{document}